\newtheorem{algorithm}{Algorithm}%[section]
\newtheorem{theorem}{Theorem}%[section]
\definecolor{Red}{rgb}{1,0,0}
\def\vec#1{{\bm #1}}
\def\diag{\operatorname{diag}}
\def\rank{\operatorname{rank}}
\def\Span{\operatorname{span}}
\def\A{\mathcal{A}}
\def\C{\mathcal{C}}
\def\E{\mathcal{E}}
\def\H{\mathcal{H}}
\def\M{\mathcal{M}}
\def\N{\mathcal{N}}
\def\ONE{\mathbb{I}}
\def\CC{\mathbb{C}}
\begin{document}

\title{Numerical method for finding decoherence-free subspaces and its applications}
\author{Xiaoting Wang$^{1,2,4}$, Mark Byrd$^3$, and Kurt Jacobs$^{1,4,5}$}

\affiliation{ 
$^1$Department of Physics, University of Massachusetts at Boston, Boston, MA 02125, USA \\
$^2$Research Laboratory of Electronics, Massachusetts Institute of Technology, Cambridge, Massachusetts 02139, USA\\
\mbox{$^3$Physics Department \& Computer Science Department, Southern Illinois University, Carbondale, Illinois 62901, USA}\\ 
$^4$ Advanced Science Institute, RIKEN, Wako-shi 351-0198, Japan \\ 
$^5$Hearne Institute for Theoretical Physics, Louisiana State University, Baton Rouge, LA 70803, USA
} 

%\date{\today}

\begin{abstract}
In this work, inspired by the study of semidefinite programming for block-diagonalizing matrix *-algebras, we propose an algorithm that can find the algebraic structure of decoherence-free subspaces (DFS's) for a given noisy quantum channel. We prove that this algorithm will work for all cases with probability one, and it is more efficient than the algorithm proposed by Holbrook, Kribs, and Laflamme [Quant. Inf. Proc. {\bf 80}, 381 (2003)]. In fact, our results reveal that this previous algorithm only works for special cases. As an application, we discuss how this method can be applied to increase the efficiency of an  optimization procedure for finding an approximate DFS.  
\end{abstract}

\pacs{03.67.Pp, 03.65.Yz} 

\maketitle
\vspace{-4mm}
\section{Introduction}
\vspace{-3mm}
Decoherence and other noises cause errors in quantum information processing.  Several methods have been proposed to significantly reduce these errors~\cite{PhysRevA.57.3276}, such as quantum error correcting codes~\cite{Shor:95,Calderbank:96,Steane,Gottesman:97,Gottesman:97b,Knill:97b,Gaitan:book}, decoherence-free/noiseless subsystems (DFS's)~\cite{PhysRevLett.79.3306,Lidar98, Lidar99, Zanardi99, Bacon00, Viola00, PhysRevA.63.012301,Wu02,PhysRevLett.91.187903,PhysRevA.72.042303, Bishop-Byrd}, and dynamical decoupling (DD)~\cite{Viola99, Viola99b, Kofman01, Wu02b, Uhrig07, Uhrig09, Souza11}.  However, dynamical decoupling controls often do not satisfy the strict experimental requirements and quantum error correcting codes, while possibly protecting against any error, require a large resource overhead.  Decoherence-free subspaces, or noiseless subsystems, can reduce overhead since they do not require detecting and correcting errors but are difficult to identify and use.  

In terms of the (Krauss) operator-sum representation, the noisy quantum evolution can be fully characterized by a set of operators that generate a noise algebra $\A$. The structure of a DFS can be recovered by studying the algebraic structure of $\A$, or its commutant algebra $\A'$, both of which are special examples of a so-called $\CC^*$-algebra, or matrix $*$-algebra~\cite{Wedderburn}.  Although DFS's can be obtained analytically for certain noisy systems, this is not possible in general. In~\cite{Holbrook} a numerical algorithm is proposed in which the noise algebra $\A$ is used to calculate the commutant algebra $\A'$ and decompose it into the algebraic form which gives the structure of all DFS's if any exist. However, we find that this algorithm is incomplete although useful for the special solutions of the basis of $\A'$ as chosen in~\cite{Holbrook}. In practice, in most of the cases, computers will pick up other solutions in which their algorithm does not give the complete decomposition. Because of this we were motivated to find a new algorithm that provides a general algorithm which can be explicitly implemented to give the required decomposition. 

In fact, since $\A$ and $\A'$ are special examples of a matrix $*$-algebra, we can try to solve the more general problem of how to decompose an arbitrary matrix $*$-algebra. We find that this problem is equivalent to that of simultaneously block-diagonalizing a matrix $*$-algebra, and this has been well-studied in research on semidefinite programming~\cite{Murota, Klerk}. In particular, in~\cite{Murota}, a numerical method was proposed to find the finest block-diagonalization of the algebra generated by real symmetric matrices. Their method consists of two steps. First, decompose the algebra into simple components, and second, decompose each simple component into irreducible components (the details of this procedure will be clarified in the following). In this work, we will show that such a two-step algorithm can also be applied to the algebra generated by Hermitian matrices, such as $\A$ and $\A'$. We will also give analysis and proofs to show the validity of this algorithm. As applications, we apply our new algorithm to the collective-noise model in~\cite{Holbrook}, and compare the numerical results with the algorithm proposed there. We find that our improved algorithm is not only \emph{valid in general}, but is also more \emph{efficient}: it requires fewer conditional loops, and requires only $\A$ or $\A'$ alone, rather than both~\cite{Holbrook}.  

The paper is organized as follows: in Section~\ref{sec:model}, we introduce the matrix *-algebra $\A$ generated by noise operators for a given quantum channel, and the Wedderburn decomposition for a DFS. In Section~\ref{sec:algorithm}, we present an algorithm to transform $\A$ into the Wedderburn form using two steps. Finally in Section~\ref{sec:application}, for the collective-noise model, we numerically implement and compare our algorithm with the one proposed in~\cite{Holbrook}. We will also briefly discuss how our algorithm can be used to find a good initial point for the optimization process in searching for an approximate DFS.

\vspace{-3mm}
\section{Algebraic Structure of a Noise Algebra}\label{sec:model}

\vspace{-3mm}
\subsection{Noise Algebra for a Noisy Quantum Channel} 

\vspace{-3mm}
Let $\rho$ be the density operator of an $n$-dimensional quantum system with Hilbert space $\H$. In real physical systems, the evolution of $\rho$ suffers from noise due to its interaction with the environment. Such noisy evolution can be represented as a quantum channel $\E$: $\rho\to \E(\rho)$, where $\E$ is characterized by a set of operators $\{A_k\}$, $j=1,\ldots,p$:
\begin{align*}
\E(\rho)=\sum_{k=1}^p A_k \rho A_k^\dag=p_0\rho+\sum_{k=2}^p A_k \rho A_k^\dag,\quad \sum_k A_k^\dag A_k=\ONE . 
\end{align*}
This is often referred to as the Kraus operator-sum representation~\cite{Nielsen}, where $\{A_k\}$ can include a Hamiltonian as well as irreversible coupling to a Markovian bath. In the following, we will assume the Hamiltonian $H=0$ and only focus on the noise effect on $\rho$. In this case the $\{A_k\}$ contain information purely about the noise, and are known as noise operators. For many channels, we have $A_1=\sqrt{p_0}\ONE$, where $p_0$ represents the probability that no error occurs. The noisy channel is referred to as \emph{unital} if $\E(\ONE)=\sum_k A_k A_k^\dag=\ONE$. Define the \emph{noise algebra} $\A$ to be the $\CC^*$-algebra, or the matrix $*$-algebra generated by $\{A_k\}$. The definition of a matrix $*$-algebra simply implies that $\A$ is closed under matrix summation, multiplication and $\dagger$-operation. The reason why we introduce the concept of a matrix $*$-algebra is that it can be decomposed into a nice algebraic structure, with details in the following.

\subsection{Wedderburn Decomposition for a DFS}
For a general $\rho$, $\E(\rho)\neq \rho$, and thus the quantum information stored in $\rho$ will not be preserved by the noise. It may be possible, however, find a subspace or subsystem in some space $\H_1\in \H$ such that for $\rho\in \H_1$, $\E(\rho)= \rho$. If so $\H_1$ is called a decoherence free subspace or subsystem (DFS). Notice that for a unital channel $\E$, if $[\rho,A_k]=0$, for all $k$, then $\E(\rho)= \rho$. Hence to locate a DFS it is enough to study the commutant of $\A$, which is defined to be 
\begin{align*}
\A'=\{B|[B,A]=0, \ A\in \A\} , 
\end{align*}
and is also a matrix $*$-algebra. Applying the Wedderburn-Artin theorem\cite{Wedderburn, Gijswijt} to a special case, it can be shown that every matrix $*$-algebra with an identity has the following fundamental structure decomposition~\cite{Barker,Gijswijt}: 
\begin{theorem} \label{thm:decom}
(Wedderburn decomposition) 
Let $\A \subseteq \CC^{n\times n}$ be  a matrix $*$-algebra with an identity. Then there exists a unitary transformation $U$ such that $U^\dag \A U$ has a block-diagonal structure: 
\begin{align*}
U^\dag \A U=\diag(\N_1,\N_2,\cdots,\N_\ell)
\end{align*}
where each $\N_i$ corresponds to a simple subalgebra component. Moreover, $\N_i$ has the following block-diagonal structure: 
\begin{align}\label{eqn:N_i}
\N_i=\{\diag(M_i,\cdots,M_i), \, M_i\in \M_{n_i}\}= \M_{n_i}\otimes \ONE_{m_i}
\end{align}
where $\M_{n_i}$ denotes the $n_i\times n_i$ matrix $*$-algebra over the complex field $\CC$. 
\end{theorem}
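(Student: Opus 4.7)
The plan is to adapt the classical proof of the Wedderburn--Artin theorem to the finite-dimensional $*$-setting, structurally mirroring the two-step algorithm the paper will later present: first peel off the simple components using the center of $\A$, then put each simple component into the multiplicity form $\M_{n_i}\otimes \ONE_{m_i}$.

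\textbf{Step 1 (reduction via the center).} Consider $Z(\A)=\A\cap \A'$. It is a commutative $*$-subalgebra of $\CC^{n\times n}$ containing $\ONE$, so by the spectral theorem it can be simultaneously unitarily diagonalized. Let $p_1,\ldots,p_\ell$ be its minimal nonzero projections; they are mutually orthogonal, central in $\A$, and sum to $\ONE$. Centrality gives the internal direct sum $\A=\bigoplus_{i=1}^\ell \A_i$ with $\A_i:=p_i\A p_i$ acting on $\H_i:=p_i\H$, and choosing an orthonormal basis adapted to the $\H_i$ yields a unitary $U_1$ with $U_1^\dag \A U_1=\diag(\A_1,\ldots,\A_\ell)$. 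By construction $Z(\A_i)=\CC p_i$, so each $\A_i$ is a simple $*$-algebra.

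\textbf{Step 2 (each simple factor is $\M_{n_i}$).} Fix $i$ and pick a minimal projection $q\in\A_i$, which exists by finite-dimensionality. Minimality together with simplicity forces $q\A_i q=\CC q$. Simplicity also guarantees that any two nonzero projections in $\A_i$ are Murray--von Neumann equivalent \emph{inside} $\A_i$, so one may transport $q$ by partial isometries $v_j\in\A_i$ to a family of mutually orthogonal minimal projections $q_j:=v_j v_j^\dag$ ($j=1,\ldots,n_i$) summing to $p_i$. Setting $e_{jk}:=v_j v_k^\dag$ then gives a system of $n_i^2$ matrix units that spans $\A_i$ as a $*$-algebra and exhibits an isomorphism $\A_i\cong \M_{n_i}(\CC)$.

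\textbf{Step 3 (tensor structure and global unitary).} Let $m_i:=\rank q$, fix an orthonormal basis $\{\ket{\phi_\beta}\}_{\beta=1}^{m_i}$ of the range of $q$, and set $\ket{j,\beta}:=v_j\ket{\phi_\beta}$. Orthogonality of the $q_j$ together with $v_j^\dag v_j=q$ shows that $\{\ket{j,\beta}\}$ is an orthonormal basis of $\H_i$ of size $n_i m_i=\dim \H_i$, and the matrix units act by $e_{jk}\ket{k',\beta}=\delta_{kk'}\ket{j,\beta}$, i.e.\ exactly as $\ket{j}\bra{k}\otimes \ONE_{m_i}$. Thus under the induced identification $\H_i\cong \CC^{n_i}\otimes \CC^{m_i}$ one has $\A_i=\M_{n_i}\otimes \ONE_{m_i}$, and concatenating these basis choices with $U_1$ produces the desired global unitary $U$.

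I expect the main obstacle to be Step 2, namely showing that simplicity indeed allows the matrix units to be built entirely within $\A_i$. Concretely, one must verify that any two minimal projections in a simple finite-dimensional $*$-algebra are connected by a partial isometry belonging to the algebra; this is typically obtained from the observation that the two-sided $*$-ideal generated by a minimal projection is either $\{0\}$ or the whole of $\A_i$, forcing the existence of the required intertwiners. The remaining content is basis-level bookkeeping together with a Schur-type argument to identify the commutant action on the multiplicity space $\CC^{m_i}$.
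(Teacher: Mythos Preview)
The paper does not actually prove Theorem~\ref{thm:decom}. It is quoted as a known structure result, introduced with the phrase ``Applying the Wedderburn--Artin theorem [\ldots] it can be shown that every matrix $*$-algebra with an identity has the following fundamental structure decomposition,'' and supported by citations rather than an argument. The algorithmic Section~\ref{sec:algorithm} is about \emph{computing} the unitary $U$ whose existence Theorem~\ref{thm:decom} asserts, not about establishing that existence; the only in-text proof is that of Theorem~\ref{thm:unitary}. So there is no ``paper's own proof'' to compare your proposal against.

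As a stand-alone argument your sketch is the standard one and is essentially correct. One imprecision worth tightening: in Step~2 you assert that ``any two nonzero projections in $\A_i$ are Murray--von Neumann equivalent inside $\A_i$.'' That is false as stated (projections of different rank cannot be equivalent); what is true, and what you actually use, is that any two \emph{minimal} projections in a simple finite-dimensional $*$-algebra are equivalent. With that adjustment, the construction of matrix units $e_{jk}=v_jv_k^\dag$ and the subsequent identification $\A_i\cong \M_{n_i}\otimes \ONE_{m_i}$ go through exactly as you describe.
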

Here $\N_i =\M_{n_i}\otimes \ONE_{m_i}$ is an algebra different from $\M_{n_i}\oplus \cdots \oplus \M_{n_i}$. 
Applying Theorem~\ref{thm:decom} to the conjugates $\A$ and $\A'$, we can find some unitary $U$ such that:
\begin{subequations}\label{eqn:alge_decom}
\begin{align}
U^\dag\A U&=\bigoplus_i^\ell \N_i=\bigoplus_i^\ell \M_{n_i}\otimes \ONE_{m_i} , \\
U^\dag\A' U&=\bigoplus_i^\ell \N_i'=\bigoplus_i^\ell \ONE_{n_i}\otimes \M_{m_i} . 
\end{align}
\end{subequations}
Mathematically, each $\N_i$, $i=1,\cdots, \ell$, corresponds to a simple component of $\A$, while the subblock $M_i$ at each diagonal position corresponds to an irreducible component. 

Assume that there exists some $m_i>1$, and call this $m_k$. We can encode an arbitrary $m_k$-dimensional state $\bar\rho$ into $\rho= \ONE_{n_k} \otimes {\bar\rho}\oplus  {0}_{res} \in \A'$ such that $\E(\rho)=\rho$, where ${0}_{res} $ represents the zero density operator on the rest of the Hilbert space with respect to $\ONE_{n_i} \otimes {\bar\rho}$. Hence, if we find the Wedderburn decomposition for $\A$ or $\A'$, then each $\N_i$ with $m_i>1$ corresponds to a decoherence-free subsystem (which reduces to a decoherence-free subspace if $n_i=1$). Moreover, since $\A$ and $\A'$ obey the commutant relation given in Eq.(\ref{eqn:alge_decom}), we do not need both the decompositions for $\A$ and $\A'$; one will suffice. 

\section{Numerical Algorithm to Obtain the Wedderburn Decomposition}\label{sec:algorithm}

To find the Wedderburn decomposition for a quantum channel given by a group of noise operators $\{A_k\}$, it is sufficient to find the unitary transform $U$ such that $\A$ and $\A'$ are simultaneously block-diagonalized into the decomposition in Eq.~(\ref{eqn:alge_decom}). An algorithm to do this for real symmetric $A_k$ is given in~\cite{Murota}. Here we construct an equivalent algorithm that we prove works for Hermitian $A_k$. This is sufficient for our purposes, because while the noise operators $A_k$ need not be Hermitian, we can always replace a non-Hermitian operator $A_j$ with the two Hermitian operators $A_j^{(1)}=A_j+A_j^\dag$ and $A_j^{(2)}=i(A_j-A_j^\dag)$ and still have a generating set for the algebra $\A$. For simplicity we simply assume that all the $A_k$ are Hermitian and form a basis for $\A$. The advantage of choosing an Hermitian basis will be clear in the following analysis. 

The algorithm breaks into two steps: 
\begin{algorithm}
(Wedderburn decomposition) Let $\A \subseteq \CC^{n\times n}$ be  a matrix $*$-algebra with an identity, and $A$ a ``generic'' element of $\A$ (``generic'' is defined below). \\
\textbf{Step 1:} 
Find the unitary transform $V$ such that 
\begin{align}\label{eqn:decom1}
V^\dag A V=\diag(\C_1,\C_2,\cdots,\C_\ell) , 
\end{align}
where each $\C_i$ corresponds to some representation of the simple component $\N_i$ in~(\ref{eqn:alge_decom}).\\
\textbf{Step 2}: Find the local unitary transform $P$ such that within each $\C_i$, $P^\dag V^\dag A VP$ is equal to $\N_i=\M_{n_i}\otimes \ONE_{m_i}$. Then $U\equiv VP$ is the required unitary transform for the Wedderburn decomposition. 
\end{algorithm}

To implement the two steps above one picks a single operator $A \in \A$, and diagonalizes $A$ to find the required decompositions. Due to the decomposition in Eq.~(\ref{eqn:alge_decom}), we know that there exists a unitary transformation $\bar V$ such that for $A \in \A$, 
\begin{align}\label{eqn:generic}
\bar V^\dag A \bar V=\bigoplus_i^\ell (\ONE_{m_i}\otimes D_i) , 
\end{align}
where the $D_i$ are diagonal matrices whose elements are the eigenvalues of $A$. To obtain the spaces spanned by the simple algebras from this eigenvalue decomposition, we need to pick an $A$ such that the $D_1,\ldots, D_{\ell}$ do not share any eigenvalues, and the eigenvalues in each $D_i$ are distinct. Note that $A$ will have this property if it has the maximum possible number of distinct eigenvalues. It can be shown~\cite{Murota} that the set of operators that have this maximum number of distinct eigenvalues is topologically dense in $\A$, and so we will refer to such $A$ as being \textit{generic}. If we randomly choose $A$ from $\A$ using a suitable measure, it will be generic with probability $1$. A simple way to generate a generic $A$ is to choose a random vector $\vec \alpha=(\alpha_1,\ldots, \alpha_k)$ and generate $A=\sum_{j=1}^k \alpha_j A_j$. 

After picking a generic $A$, we diagonalize $A$ and obtain the distinct eigenvalues, $\lambda_j$, and their multiplicities, $k_j$, $j=1,\ldots, q$. We then group these eigenvalues and write down the eigenspace decomposition according to their multiplicities in a non-decreasing order:
\begin{align}\label{eqn:eigen_decom} 
V^\dag A V=\diag(\lambda_1\ONE_{k_1},\lambda_2\ONE_{k_2},\cdots, \lambda_q\ONE_{k_q})
\end{align}
with $V=(V_1,V_2,\ldots,V_q)$ where each $V_j$ is composed of the eigenvectors corresponding to the eigenspace of $\lambda_j$. We can further define a new division of $V$: $V=(K^{(1)},K^{(2)},\cdots, K^{(s)})$ where each $K^{(r)}$ is the union of all eigenspaces $V_j$ with the same multiplicity $p_r$:
\begin{align*}
K^{(r)}=\bigoplus_{k_j=p_r} V_{j}
\end{align*}
where $p_1<p_2<\cdots < p_s$ are the distinct multiplicities of the eigenvalue $\alpha_j$'s.  

By Theorem~\ref{thm:decom}, each $V_j$ will lie in some simple component $\N_i$, $i=1,\ldots, \ell$. Due to the form of $\N_i= \M_{n_i}\otimes \ONE_{m_i}$, we immediately know that only $V_j$'s within the same $K^{(r)}$ can belong to the same $\N_i$, and $V_j$'s in different $K^{(r)}$'s must belong to different $\N_i$'s. Hence, each $K^{(r)}$ must either be some $\N_i$, or a direct sum of a few $\N_i$'s, in which case $K^{(r)}$ can be further decomposed. In either case $\A$ is block-diagonalized over the division $V=(K^{(1)},K^{(2)},\cdots, K^{(s)})$. 

There is a simple method to check whether $K^{(r)}$ can be further decomposed: we choose another randomly generated $\bar A=\sum_{j=1}^k \beta_j A_j \in \A$. Since $\vec \alpha$ and $\vec \beta$ are independent, with probability 1, $A$ and $\bar A$ will generate the whole algebra $\A$. As we have pointed out, both $A$ and $\bar A$ are block-diagonalized over the division $(K^{(1)},K^{(2)},\cdots, K^{(s)})$. If there exist $V_j$ and $V_{j'}$ within some $K^{(r)}$ such that $V_j^\dag \bar A V_{j'}=0$, then $\bar A$ can be further block-diagonalized on $K^{(r)}$ over the division between $V_j$ and $V_{j'}$. In this way, by checking the value of $V_j^\dag \bar A V_{j'}$ between all different $j$ and $j'$ on each $K^{(r)}$, we can identify the structure of each $\N_i$ in each $K^{(r)}$, and finally make both $A$ and $\bar A$ simultaneously block-diagonalized over $\oplus_i \N_i$. Since $A$ and $\bar A$ will generate $\A$ with probability 1, we can claim that the whole algebra $\A$ has been simultaneously block-diagonalized over $\oplus_i \N_i$. However, we should notice that within each sub-block $\N_i$, $\A$ may not be the same as $\M_{n_i}\otimes \ONE_{m_i}$, but some representation of it, so we will instead denote the sub-block by $\C_i$. Thus we have obtained a $V$ that transforms $\A$ into the form of Eq.~(\ref{eqn:decom1}). In particular this $V$ has already transformed the generic $A$ into the Wedderburn form: 
\begin{align}\label{eqn:A_decom}
V^\dag A V=\bigoplus_i^\ell(D_i \otimes \ONE_{m_i}) , 
\end{align}
where $D_i$ is a diagonal matrix with all distinct eigenvalues of $A$ on each $\C_i$. 

Now we note that $V^\dag \bar A V$ is usually not in the form of $\N_i$ on $\C_i$. In the next step, we are looking for a further unitary transform that leaves $V^\dag A V$ invariant but transforms $V^\dag \bar A V$ into the form of $\N_i$ on each $\C_i$. Without loss of generality, let us focus on a simple component $\C_i$ which is composed of a few eigenspaces $V_j$ of $A$: 
\begin{align*}
\C_i= V_{1}^{(i)} \oplus V_2^{(i)}\oplus\cdots \oplus V_{m_i}^{(i)}
\end{align*}
If according to the division $\oplus_i  \C_i$, we define a local unitary transform $P$ to be: 
\begin{align}\label{eqn:pi}
P &\equiv \diag(P^{(1)}, P^{(1)},\cdots,  P^{(\ell)})\\
P^{(i)} &\equiv \diag(P_{1}^{(i)}, P_2^{(i)},\cdots,  P_{m_i}^{(i)})
\end{align}
where $P_j^{(i)}$ is a unitary matrix on the subspace $V_j^{(i)}$, then such $P$ will leave $V^\dag \bar A V$ invariant. Moreover, the following result is proved in Proposition 3.7 in~\cite{Murota}: 

\begin{theorem}\label{thm:local}
For $A$ and $V$ satisfying (\ref{eqn:A_decom}), there exists a local unitary transform $P$ as in (\ref{eqn:pi}) such that $P^\dag V^\dag \A VP=\oplus_i (\M_{n_i}\otimes \ONE_{m_i})$.
\end{theorem}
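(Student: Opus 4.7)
The plan is to build $P$ blockwise: for each simple component $\C_i$ I will produce a block-diagonal unitary $P^{(i)} = \diag(P_1^{(i)},\ldots,P_{n_i}^{(i)})$ on the $A$-eigenspaces $V_j^{(i)}$ such that $(P^{(i)})^\dag \C_i P^{(i)} = \M_{n_i}\otimes\ONE_{m_i}$, and then assemble $P = \diag(P^{(1)},\ldots,P^{(\ell)})$. Because $P^{(i)}$ is block-diagonal over the eigenspaces of $A|_{\C_i}=D_i\otimes\ONE_{m_i}$ (which has $n_i$ distinct eigenvalues of multiplicity $m_i$), it automatically leaves $V^\dag A V$ invariant.

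First I would combine the output of Step 1 with the Wedderburn--Artin theorem: each $\C_i$ is a simple matrix $*$-algebra on $V^{(i)} := \bigoplus_{j=1}^{n_i} V_j^{(i)}$, abstractly isomorphic to $\M_{n_i}$. Since every unital $*$-representation of $\M_{n_i}$ is unitarily equivalent to a direct sum of copies of the defining representation (a standard consequence of Schur's lemma), there exists a unitary $W_i$ on $V^{(i)}$ with $W_i^\dag\,\C_i\,W_i = \M_{n_i}\otimes\ONE_{m_i}$. This $W_i$ is the correct intertwiner at the algebra level; the subtlety is that nothing so far forces it to respect the $A$-eigenspace decomposition.

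Next I would correct $W_i$ using the gauge freedom intrinsic to the standard form. Any $U_0\otimes\ONE_{m_i}$ with $U_0\in\UU(n_i)$, and any permutation $\Pi\otimes\ONE_{m_i}$, lies inside $\M_{n_i}\otimes\ONE_{m_i}$, so conjugation by it leaves this algebra invariant. The element $W_i^\dag(D_i\otimes\ONE_{m_i})W_i$ is Hermitian and lies in $\M_{n_i}\otimes\ONE_{m_i}$, hence has the form $\tilde D_i\otimes\ONE_{m_i}$ for some Hermitian $\tilde D_i\in\M_{n_i}$; since unitary conjugation preserves spectrum with multiplicities, $\tilde D_i$ has the same eigenvalues as $D_i$. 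Diagonalising $\tilde D_i$ by a suitable $U_0$ and reordering by a permutation $\Pi$ to match $D_i$, and setting $\tilde W_i := W_i(U_0\Pi\otimes\ONE_{m_i})$, I obtain simultaneously
\begin{align*}
\tilde W_i^\dag\,\C_i\,\tilde W_i &= \M_{n_i}\otimes\ONE_{m_i}, \\
\tilde W_i^\dag (D_i\otimes\ONE_{m_i})\tilde W_i &= D_i\otimes\ONE_{m_i}.
\end{align*}

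The second identity says $\tilde W_i$ commutes with $D_i\otimes\ONE_{m_i}$; and because $D_i$ has $n_i$ pairwise distinct eigenvalues each of multiplicity $m_i$, its commutant in $\CC^{n_im_i\times n_im_i}$ is exactly the block-diagonal subalgebra $\{\diag(Q_1,\ldots,Q_{n_i}):Q_j\in\CC^{m_i\times m_i}\}$. Hence $\tilde W_i$ already has the required shape and may be taken as $P^{(i)}$, finishing the construction. The step I expect to be the main obstacle is the gauge correction in the third paragraph: producing an abstract intertwiner is routine, but adjusting it so that the specific Hermitian $A$ returns exactly to the canonical position $D_i\otimes\ONE_{m_i}$ is precisely what lets the final commutant argument force the block-diagonal form. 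Without the distinctness of the eigenvalues of $D_i$---i.e.\ without the genericity assumption built into Step~1---the commutant would be strictly larger and the conclusion would fail.
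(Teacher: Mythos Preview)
The paper does not actually prove this theorem: it simply attributes the result to Proposition~3.7 of \cite{Murota} and then uses it as a black box to derive Eq.~(\ref{eqn:VAV}) and Theorem~\ref{thm:unitary}. Your proposal therefore supplies what the paper omits, and the argument you give is correct.

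Your route is also somewhat different in spirit from the constructive approach taken in \cite{Murota} and echoed in the paper's Step~2. There one explicitly builds the local blocks $Q_j^{(i)}$ from the off-diagonal pieces $V_1^{(i)\dag}\bar A V_j^{(i)}$ of a second generic element; existence of the required $P$ in Theorem~\ref{thm:local} is established by a direct calculation on those blocks. You instead argue abstractly: first produce any intertwiner $W_i$ from the representation theory of $\M_{n_i}$, then exploit the gauge freedom $U_0\Pi\otimes\ONE_{m_i}$ inside the target algebra to force the image of $D_i\otimes\ONE_{m_i}$ back to itself, and finally read off the block-diagonal shape from the commutant of a matrix with distinct eigenvalues. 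This is cleaner conceptually and isolates exactly where genericity of $A$ is used (the commutant step), at the price of being nonconstructive. One minor point: your block count is $n_i$ (the number of distinct eigenvalues of $A$ on $\C_i$), which is consistent with $\N_i=\M_{n_i}\otimes\ONE_{m_i}$; the paper's Eq.~(\ref{eqn:pi}) writes $m_i$ blocks, which appears to be an indexing slip there rather than an error on your part.
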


Hence, it is possible to construct a local unitary $Q$ (which may not be equal to $P$) in the form of Eq.~(\ref{eqn:pi}) such that $\bar Q^\dag V^\dag \A V\bar Q$ is in the Wedderburn form. Before we design the required $Q$, we would like to find out what the matrix of $V^\dag \bar A V$ looks like on $\C_i$ after the transform $V$. Notice that since Theorem~\ref{thm:local} only claims the existence of such $P$, the local transform $Q$ we finally construct may look either the same as, or different from $P$. 

A matrix is called a \emph{scalar matrix} if it is equal to a scalar times an identity matrix. On each $\C_i$, as a corollary of Theorem~\ref{thm:local}, we have 
\begin{align}\label{eqn:VAV}
\bar A_{j,j'}\equiv (V^\dag \bar A V)^{(i)}_{j,j'}=V_j^{(i)\dag} \bar A V^{(i)}_{j'}=k_{j,j'}P_j^{(i)} P_{j'}^{(i)\dag}
\end{align}
For $j=j'$, $\bar A_{j,j'}$ is equal to $k_{j,j}\ONE_{m_i}$ (that is, the diagonal sub-blocks of $(V^\dag \bar A V)^{(i)}$ are already in scalar-matrix form). For $j\neq j'$ the off-diagonal sub-blocks $\bar A_{j,j'}$ may or may not be in this form. Our next goal is to find a local transform $Q$ in the form of (\ref{eqn:pi}) such that $(Q^{(i)\dag} (V^\dag \bar A V)^{(i)}Q^{(i)})_{j,j'}$ are in scalar-matrix form for all $j$ and $j'$.

On each $\C_i$, we can sequentially construct each $Q_j^{(i)}$ in $Q^{(i)} \equiv \diag(Q_{1}^{(i)}, Q_2^{(i)},\cdots,  Q_{m_i}^{(i)})$. First, choose $Q_{1}^{(i)}=\ONE_{m_i}$. Then for $j\ge 2$ define
\begin{subequations}
\label{eqn:q_j}
\begin{align}
\hat Q_{j}^{(i)}&=\big (V_1^{(i)\dag} \bar A V^{(i)}_{j}\big)^{-1} Q_1^{(i)} , \\
Q_{j}^{(i)}&=\frac{1}{||q_j||}\hat Q_{j}^{(i)} , 
\end{align}
\end{subequations}
where $q_j$ is the first row of $\hat Q_{j}^{(i)}$. We now prove that this $Q^{(i)}$ is the required unitary transform. 
 
\begin{theorem}\label{thm:unitary}
$Q_{j}^{(i)}$ as defined in (\ref{eqn:q_j}) are unitary matrices, and $Q^{(i)} \equiv \diag(Q_{1}^{(i)}, Q_2^{(i)},\cdots,  Q_{m_i}^{(i)})$ is the unitary transform such that $Q_j^{(i)^\dag} (V^\dag \bar A V)^{(i)}_{j,j'} Q_{j'}^{(i)}$ are scalar matrices. 
\end{theorem}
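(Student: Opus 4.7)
My plan is to reduce both claims to a direct computation by substituting the structural identity Eq.~(\ref{eqn:VAV}) into the definition of $Q_j^{(i)}$. Theorem~\ref{thm:local} guarantees the existence of unitaries $P_j^{(i)}$ and scalars $k_{j,j'}$ with $\bar A_{j,j'} = k_{j,j'} P_j^{(i)} P_{j'}^{(i)\dag}$; the argument will never need to know these $P_j^{(i)}$ explicitly, only that such a representation exists. In this way the explicit construction of $Q$ is decoupled from the existence result of~\cite{Murota}.

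First I will establish unitarity. With $Q_1^{(i)}=\ONE_{m_i}$, the definition gives $\hat Q_j^{(i)} = (k_{1,j} P_1^{(i)} P_j^{(i)\dag})^{-1} = k_{1,j}^{-1} P_j^{(i)} P_1^{(i)\dag}$, a nonzero scalar times a unitary. Since each row of a unitary matrix has unit Euclidean norm, the first row $q_j$ of $\hat Q_j^{(i)}$ satisfies $\|q_j\| = |k_{1,j}|^{-1}$. Dividing yields $Q_j^{(i)} = (|k_{1,j}|/k_{1,j})\, P_j^{(i)} P_1^{(i)\dag}$, i.e.\ a unit-modulus phase times a unitary, which is itself unitary.

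Next I will verify the scalar-matrix property by direct substitution into $Q_j^{(i)\dag} \bar A_{j,j'} Q_{j'}^{(i)}$. The key observation is that the $P$'s cancel telescopically: the outer factor $P_1^{(i)} P_j^{(i)\dag}$ coming from $Q_j^{(i)\dag}$ and the outer factor $P_{j'}^{(i)} P_1^{(i)\dag}$ coming from $Q_{j'}^{(i)}$ combine with the inner $P_j^{(i)} P_{j'}^{(i)\dag}$ from $\bar A_{j,j'}$ to collapse to $P_1^{(i)} P_1^{(i)\dag} = \ONE_{m_i}$. What remains is a pure scalar built from $k_{j,j'}$ and the two normalization phases, so the $(j,j')$ block is indeed a scalar matrix, as claimed.

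The main potential obstacle is well-definedness of the construction: the inversion of $V_1^{(i)\dag} \bar A V_j^{(i)}$ requires $k_{1,j}\neq 0$ for every $j$ in every simple component $\C_i$. Since $\bar A$ is drawn randomly from $\A$, the locus on which some $k_{1,j}$ vanishes is cut out by polynomial conditions in $\vec\beta$ and therefore lies in a proper algebraic subvariety of $\A$, hence has measure zero; with probability one the construction succeeds. In an implementation this is handled trivially by re-randomizing $\vec\beta$ whenever any $V_1^{(i)\dag} \bar A V_j^{(i)}$ is found to be singular.
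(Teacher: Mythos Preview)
Your argument is correct and is essentially the paper's own proof: substitute Eq.~(\ref{eqn:VAV}) into the definition of $\hat Q_j^{(i)}$ to see it is a scalar multiple of the unitary $P_j^{(i)}P_1^{(i)\dag}$, then verify the scalar-block property by the same telescoping cancellation $P_1^{(i)}P_j^{(i)\dag}\cdot P_j^{(i)}P_{j'}^{(i)\dag}\cdot P_{j'}^{(i)}P_1^{(i)\dag}=\ONE_{m_i}$. Your added paragraph on the invertibility condition $k_{1,j}\neq 0$ is a useful point that the paper leaves implicit.
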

\begin{proof}
To show $Q_{j}^{(i)}$ is a unitary matrix, it is sufficient to show $\hat Q_{j}^{(i)}\hat Q_{j}^{(i)\dag}$ is in scalar-matrix form. For $j\ge 2$, from (\ref{eqn:VAV}), we have:
\begin{align*}
\hat Q_{j}^{(i)}\hat Q_{j}^{(i)\dag}&=(k_{1,j}P_1^{(i)} P_{j}^{(i)\dag})^{-1} (k^*_{1,j}P_j^{(i)} P_{1}^{(i)\dag})^{-1}\\
&= |k_{1,j}|^{-2}\big ( P_j^{(i)} P_{1}^{(i)\dag} P_1^{(i)} P_{j}^{(i)\dag} \big )^{-1}= |k_{1,j}|^{-2} \ONE_{m_i}
\end{align*}
Hence, after normalization, $Q_{j}^{(i)}$ becomes a unitary matrix.  In addition, 
\begin{align*}
&Q_j^{(i)^\dag} (V^\dag \bar A V)^{(i)}_{j,j'} Q_{j'}^{(i)}=Q_j^{(i)^\dag}  k_{j,j'}P_j^{(i)} P_{j'}^{(i)\dag} Q_{j'}^{(i)}\\
=&k_{j,j'}/(k_{1,j}^*k_{1,j'})P_1^{(i)} P_{j}^{(i)\dag}P_j^{(i)} P_{j'}^{(i)\dag} P_{j'}^{(i)} P_1^{(i)\dag}\\
=&k_{j,j'}/(k_{1,j}^*k_{1,j'})\ONE_{m_i},
\end{align*}
and so all sub-blocks are in the scalar-matrix form.
\end{proof}

Numerically, following Eq.~(\ref{eqn:q_j}), we can construct the local unitary transform $Q$ in the form of Eq.~(\ref{eqn:pi}) that leave $A$ invariant but transforms $\bar A$ into the form $\oplus_i(M_{n_i}\otimes \ONE_{m_i})$. Since $\A$ is generated by $A$ and $\bar A$, we can claim that the whole algebra $\A$ is in the Wedderburn form after the unitary transform $U\equiv VQ$. We summarize the full algorithm in Table~\ref{tab:1}.
\begin{table}
\begin{ruledtabular}
\begin{tabular}{ll}
\textbf{Step 1}: & (a) from $\A$, pick two generic matrices $A$ and $\bar A$\\ 
& (b) diagonalize $A$ and $\bar A$ to get $V$ as in Eq.~(\ref{eqn:eigen_decom})\\
& (c) find the structure of $\N_i$ in $K^{(r)}$, getting Eq.~(\ref{eqn:A_decom})\\
\hline
\textbf{Step 2}: &(d) build the local transform $Q$ using Eq.~(\ref{eqn:q_j})\\
& (e) $U=VQ$ is the required unitary in Eq.~(\ref{eqn:alge_decom})
\end{tabular}
\end{ruledtabular}
\caption{Algorithm to find $U$ in the decomposition Eq.~(\ref{eqn:alge_decom}). \label{tab:1}}
\end{table}

When implementing the algorithm for a given noisy channel, we can calculate the Wedderburn form for either $\A$ or $\A'$, depending on which one is easier to derive. Notice that for special cases when $\N_i=\M_k\otimes \ONE_1$, or $N_i=\ONE_k$, after Step 1 in Algorithm 1, $\C_i$ will already be the same as $\N_i$. For such cases, there is no need to implement Step 2, and we can simply choose $Q^{(i)}=\ONE$ on $\N_i$.

\section{Applications}\label{sec:application}

\subsection{Finding the DFS Structure of a Channel}

The primary application for the above algorithm is deriving the DFS structure for a given noisy quantum channel, and finding the corresponding unitary transform $U$ in Eq.~(\ref{eqn:alge_decom}). First of all, let's reinvestigate the collective noise model calculated in~\cite{Holbrook}. For a system with $n_q$ qubits, we say a quantum channel $\E$ is under collective noise if 
\begin{align*}
\E(\rho)&=A_x \rho A_x^\dag + A_y \rho A_y^\dag + A_z \rho A_z^\dag , \\
A_k &= \frac{1}{\sqrt{3}} e^{iS_k}, \quad k=x,y,z , 
\end{align*}
where 
\begin{align*}
S_x = \sum_{i=1}^{n_q} X_i, \quad S_y = \sum_{i=1}^{n_q} Y_i,\quad S_z = \sum_{i=1}^{n_q} Z_i 
\end{align*}
are sums of local Pauli operators on each qubit. For such a noisy channel, we can define the algebra generated by the noise operators by 
\begin{align*}
\A &\equiv \Span\{ A_x,A_y,A_z \}= \Span\{ S_x,S_y,S_z \} , \\
&= \Span\{ S_x,S_y \}= \Span\{ S_y,S_z \}= \Span\{ S_x,S_z \} , 
\end{align*}

We would like to find the Wedderburn decomposition of $\A$ or $\A'$ in the form of Eq.~(\ref{eqn:alge_decom}). Notice that since the collective noise channel is a special type of noisy channel, we can actually derive the the fundamental decomposition by using Young diagrams for the
addition of angular momentum for any value of $n_q$ \cite{Byrd:06}. For example, for $n_q=3$, $\A=(\M_2\otimes \ONE_2)\oplus \M_4$; for $n_q=4$, $\A=\ONE_2\oplus (\M_3\otimes \ONE_3)\oplus \M_5$. However, theory does not give a specific basis for the operators $\A$, and must identify one numerically. In the following, we shall apply both the algorithm suggested in~\cite{Holbrook} and the above Algorithm~1 to the collective noise channel and compare the numerical results. 

In the algorithm suggested by~\cite{Holbrook}, we need to first calculate each $B_j$ in $\A'=\Span\{ B_1,\ldots,B_r \}$, where $\{B_j\}$ form a basis for $\A'$. Next, based on the operators $\{B_j\}$, $j=1,\cdots,r$, we find a group of so-called minimal-reducing projectors $P_j$, and then block-diagonalize $\A'$ into the form $\diag(\C_1,\C_2,\cdots,\C_\ell)$, where  $\C_i$ is a representation of $\N_i\equiv \M_{n_i}\otimes \ONE_{m_i}$. Then after reshuffling the order of basis vectors, all the diagonal sub-blocks are transformed into the scalar-matrix form. It is then claimed in~\cite{Holbrook} that the whole algebra $\A'$ is in the form of Eq.~(\ref{eqn:alge_decom}). 

If we compare the algorithm in~\cite{Holbrook} with the algorithm in Table~\ref{tab:1}, we find that the former achieves Step 1, but Step 2 is missing. Step 2 turns out to be necessary for most cases, since the solution $\{B_j\}$ as the set of basis $\A'$ is not unique. It is true that for the $\{B_j\}$ chosen in~\cite{Holbrook}, Step 1 and reshuffling of basis vectors are enough to transform $\A'$ into the form of (\ref{eqn:alge_decom}), but such choice of $\{B_j\}$ is very special. A different solution for the $\{B_j\}$ will be obtained it is derived numerically by solving the system of linear equations $[B_j,A_k]=0$, $k=1,\cdots, p$. 

As an example of the necessity of step 2, consider for $n_q=4$, and $\A'=\M_2\oplus (\M_3\otimes \ONE_3)\oplus \ONE_5$. Following the algorithm in~\cite{Holbrook} and using Matlab, we derive a group of 14 orthonormal basis matrices in $\A'$ ($\{B_j\}$, $j=1,\cdots,14$) which are different from those in~\cite{Holbrook}. Then we find the corresponding minimal-reducing projectors $P_m$, $m=1,\cdots,6$, in which $\rank(P_1)=\rank(P_2)=1$ corresponding to the $\N_1=\M_2$ subspace, $\rank(P_6)=5$ corresponding to the $\N_3=\ONE_5$ subspace,  and $\rank(P_i)=3$, $i=3,4,5$, corresponding to the $\N_2=\M_3\otimes \ONE_3$ subspace. Hence, $\{P_m\}$ induces a unitary transform $V$ such that  $V^\dag \A' V$ is in the form $\diag(\C_1,\C_2,\C_3)$, where $\C_1=\N_1$, $\C_3=\N_3$, and $\C_2$ is some representation of $\N_2$. After reshuffling the basis vectors we can make the three diagonal sub-blocks of $V^\dag B_j V$ on $\C_2$ in the scalar-matrix form. 

Specifically, if we still denote $V^\dag B_1V$ as $B_1$, then after reshuffling, on $\C_2$ we have
\begin{align*}
B_1=\begin{pmatrix}
-0.167 \ONE_3  & B_{1,2} & B_{1,3}\\        
B_{1,2}^\dag &  0.233\ONE_3&  B_{2,3} \\
B_{1,3}^\dag & B_{2,3}^\dag & 0.308\ONE_3 , 
\end{pmatrix}
\end{align*}
where
\begin{widetext}
\begin{align*}
B_{1,2}=\begin{pmatrix}
-0.180 + 0.089i & -0.120 - 0.242i  &-0.042 + 0.063i\\
0.097 + 0.169i &  0.103 - 0.001i & -0.259 + 0.059i\\
0.193 - 0.060i&  -0.100 - 0.159i & -0.051 - 0.200i
\end{pmatrix} , \\ 
B_{1,3}=\begin{pmatrix}
-0.074 + 0.136i &  0.055 + 0.047i &  0.072 + 0.055i\\
  -0.003 + 0.087i&  -0.094 + 0.080i & -0.074 - 0.095i\\
  -0.066 + 0.039i &  0.087 - 0.0960i&  -0.114 - 0.041i
\end{pmatrix} , \\
B_{2,3}=\begin{pmatrix}
-0.030 - 0.078i&  -0.020 - 0.068i&  -0.067 + 0.180i\\
  -0.118 + 0.099i &  0.139 + 0.050i&  -0.037 + 0.045i\\
  -0.133 - 0.023i&  -0.016 - 0.147i & -0.009 - 0.093i
\end{pmatrix} . 
\end{align*}
Therefore, we see that for a general solution $\{B_j\}$ for $\A'$, such as the solution derived from the matlab routine, the algorithm suggested by~\cite{Holbrook} fails to give the Wedderburn decomposition, although it does give the the correct form for the particularly chosen $\{B_j\}$ in~\cite{Holbrook}.  Hence, in practice, the algorithm in~\cite{Holbrook} is sometimes insufficient, which motivated us to develop the modified algorithm presented here. Next, we would like to continue with this example for $n_q=4$, and following our algorithm to find the local unitary transform $Q=\diag(Q_1,Q_2,Q_3)$ such that all the sub-blocks of $Q^\dag B_j Q$ are in the scalar matrix form. Strictly speaking, we should use the random combination method in the last section to pick up a generic $\bar B$ for Step 2. However, for this particular example, the above $B_1$ is already a generic matrix, so we will instead based on $B_1$ to construct $Q$.

Define $Q_1=\ONE_3$, and according to Step 2 in Table~\ref{tab:1}, we define $\hat Q_k=B_{1,k}^{-1}$ and $Q_k=1/||q_k||\hat Q_k$, $k=2,3$, where $q_k$ is the first row of  $\hat Q_k$. Then we have 
\begin{align*}
Q^\dag B_1 Q=\begin{pmatrix}
-0.167 \ONE_3  & 0.345\ONE_3 &0.1931\ONE_3\\        
0.345\ONE_3 &  0.233\ONE_3&  -0.033 - 0.219i\ONE_3 \\
0.193\ONE_3 & -0.033 + 0.219i\ONE_3 & 0.308\ONE_3
\end{pmatrix}
\end{align*}
\end{widetext}

We can double-check the form of $Q^\dag B_j Q$ for other $B_j$ and we will find that all $B_j$ are transformed in the form of $\M_3\otimes I_3$. Hence, for this particular set of $\{B_j\}$, we have explicitly constructed the unitary matrix $U=VQ$ that transforms $\A'$ into the Wedderburn decomposition, which is not accessible from the algorithm in~\cite{Holbrook}. Now we know how to encode an arbitrary three-level quantum state $\bar \rho$ into the DFS $\N_2$. In the current computational basis, the encoded density operator $\rho$ should take the form:
\begin{align*}
\rho=U\big(0\oplus (\bar\rho\otimes \ONE_3)\oplus 0\big)U^\dag
\end{align*}

We note finally that: i) in the above implementation of our algorithm, we have skipped Step 1 since the algorithm in~\cite{Holbrook} has already transformed all $B_j$ into \ref{eqn:decom1}. ii) We have applied Step 2 to $B_1$ instead of to a random combination of the $\{B_j\}$, so as to make it easy for comparison. In practice, we do not really have to use the two generic operators as we suggest. They are introduced primarily to guarantee the validity of the algorithm. 

\subsection{Searching for an Approximate DFS}

Although for every noisy channel there exists a decomposition as in (\ref{eqn:alge_decom}), not all channels have a DFS that is useful for protecting quantum information. In fact, the noisy channels with a useful DFS constitute only a very small set. For many channels the algebraic decomposition (\ref{eqn:alge_decom}) looks like the following:
 \begin{align*}
\A'=\bigoplus_i k_i \ONE_{m_i} , 
\end{align*}
where $k_i\neq k_j,$ for $i\neq j$, which means that all the $\M_{n_i}$'s are 1-dimensional and so cannot store quantum information. When this happens, we would like to ask an alternative question: does there exist a subsystem on which the noise, even if not zero, is significantly reduced? This is the concept of an {\it approximate} DFS (ADFS).  

It is not easy to characterize an ADFS by algebraic conditions, as we have done for a perfect DFS. Rather, an ADFS should be formulated as an optimal solution such that the noise on the system is reduced as much as possible. Hence, it is possible to obtain an ADFS numerically by solving the corresponding optimization problem. Since there is more than one way to quantify the effects of noise, there is more than one way to define the function to be minimized in the optimization.  For the purposes of our analysis in what follows, we simply assume that one such function has been chosen, and denote it by $J$. Furthermore, one must specify when the noise is ``small enough" to be helpful as an ADFS.  

The problem of finding an ADFS involves searching for the optimal unitary matrix $U$ that transforms the original basis into a new basis, such that a state $\rho_1$, encoded in $\rho=\frac{1}{n_2}\rho_1\otimes \ONE_{n_2}\oplus 0_{res}$, experiences the least noise under the noise operators $\{\bar A_k\}$, where $\bar A_k=UA_kU^\dag$. That is, we want to minimize $J(U)$, where $U$ varies over the unitary group. Numerically, we can apply the BFGS quasi Newton method for the optimization~\cite{Nocedal06}. Broadly speaking this involves i) choosing  an initial point $U=U^{(0)}$ in the unitary matrix space; ii) calculating the value and the gradient of the objective function $J^{(0)}=J[U^{(0)}]$; iii) using the value and the gradient to implicitly derive the Hessian information and use them all to get a new $U^{(1)}$ such that $J[U^{(1)}]<J[U^{(0)}]$. Repeating steps ii) and iii), we obtain a sequence of $\{U^{(k)}\}$ with a limit corresponding to a local minimum of $J$. The limiting unitary transform $\bar U$ is what we are looking for.  
 
Due to the existence of many local minima, different initial choices of $U^{(0)}$ may result in different values of the optimized $J$. In particular, as the dimension gets larger we may need to run the optimization many times before we obtain a value of $J$ close to the global minimum. Hence a wise choice of $U^{(0)}$ can be very important in performing the numerical optimization. One important result in optimization theory is that any gradient-based algorithm only guarantees that the iteration sequence will converge to some local minimum; however, if the initial point of optimization iteration is very close to the global minimum, then the iteration sequence will converge to the global minimum. On the other hand, if our noisy quantum channel $\E$ can be considered as a perturbation of another channel $\E'$ that has a perfect DFS, then the ADFS of $\E$ should be pretty close to the DFS of $\E$. Hence, we can apply the DFS algorithm in Section~\ref{sec:algorithm} to first calculate the unitary matrix $U_0$ for the DFS of $\E'$, and then run the optimization for ADFS, with $U^{(0)}=U_0$. In that way, we will be able to derive the ADFS more efficiently.  
 
Specifically, let us take the collective noise model as an example, but this time add a perturbation to the original noise operators:
 \begin{align*}
\tilde A_x=V_{\epsilon}A_x, \quad \tilde A_y=A_y,\quad \tilde A_z=A_z , 
\end{align*}
where we have defined a perturbation unitary matrix $V_{\epsilon}$ that is sufficiently close to the identity: $||V_{\epsilon}-\ONE||<\epsilon$. Under the new noise operators $\tilde A_{x,y,z}$, we apply the algorithm in Section~\ref{sec:algorithm} and find that there is no useful DFS. Next we try to run optimization to this model in searching for ADFS. First we do the optimization using random initial point. For example, choosing $\epsilon=1$ and $n_q=4$, we run the optimization routine $6$ times, starting from different initial $U^{(0)}$, and record the final minimized $J_{min}$ in Table~\ref{tab:2}.
\begin{table}
\begin{tabular}{|c|c|c|c|c|c|c|}
\hline
 $N$    &     $1$   & $2$   & $3$ & $4$  &$5$ & $6$\\\hline
 $J_{min}$     &  0.2195  &  0.2078 &   0.2088 &   0.0123  &  0.4788 & 0.0125 \\\hline
\end{tabular}
\caption{The final value of $J_{min}$ obtained from numerical optimization, using $6$ different random starting points enumerated by $N$. \label{tab:2}} 
\end{table} 
We see that among the six different runs, only two of them have obtained $J_{min}<0.0125$. Hence we cannot guarantee that we have the best minimized $J$ from a single run of the optimization process from an arbitrary random initial $\{U^{(0)}\}$. However, if we instead choose $U^{(0)}=U_0$, where $U_0$ is the unitary matrix in the Wedderburn decomposition for the perfect DFS of $\{A_{x,y,z}\}$, then the optimization generates the minimized $J_{min}=0.0123$. For other values of $n_q$ we find similar results. Thus our DFS-finding algorithm helps in finding good initial points for the optimization of ADFS searching. 

In addition, many local minima may also result in options for our ADFS implementation.  Whereas our algorithm gives the dimensionally optimal DFS, there may be several ADFSs and the "best" one may not be the dimensionally optimal one.  The ``best" might be the one which has robust, experimentally available controls, or one that has the lowest error rate per unit time.  

\section{Conclusion}

In this work, for a given noisy quantum channel, we have presented an algorithm to numerically calculate the unitary matrix that transforms the original noise algebra into the Wedderburn form, and this gives the structure of all DFS's if any exist. This algorithm is based on the theory of the Wedderburn decomposition of matrix *-algebras. We also compared our algorithm with the earlier algorithm proposed in \cite{Holbrook}, which we found was incomplete. The new algorithm is also more efficient, in that it requires fewer checks and evaluations, and requires only information from either the noise algebra $\A$ or its conjugate $\A'$, rather than both. As an application, we show that the DFS-finding method is helpful in locating good initial points for finding approximate DFS's, and this is likely to be a more practical use for the algorithm.  

\section*{Acknowledgements}

KJ is partially supported by the NSF project PHY-1005571, and KJ and XW are partially supported by the NSF project PHY-0902906 and the ARO MURI grant W911NF-11-1-0268. All the authors are partially supported by the Intelligence Advanced Research Projects Activity (IARPA) via Department of Interior National Business Center contract number D11PC20168. The U.S. Government is authorized to reproduce and distribute reprints for Governmental purposes notwithstanding any copyright annotation thereon. Disclaimer: The views and conclusions contained herein are those of the authors and should not be interpreted as necessarily representing the official policies or endorsements, either expressed or implied, of IARPA, DoI/NBC, or the U.S. Government. 

%\section*{�����������������}
%\vspace{-2cm}

\bibliography{numer_algorithm_dfs_v4}

%\begin{thebibliography}{12}
%\bibitem{Wedderburn}
%J. H. M. Wedderburn, \emph{Lectures on Matrices} (American Mathematical Society, New York, 1934).
%
%\bibitem{Holbrook}
%J. A. Holbrook, D. W. Kribs and R. Laflamme, Quant. Inf. Proc. {\bf 80}, 381 (2003).
%
%\bibitem{Murota}
%K. Murota, Y. Kanno, M. Kojima and S. Kojima, Japan Journal of Industrial and Applied Mathematics. {\bf 27}, 125 (2010).
%
%\bibitem{Klerk}
%E. de Klerk, C. Dobre, D.V. Pasechnik, Mathematical Programming,  {\bf 129}, 91 (2011).
%
%\bibitem{Nielsen}
%M. A. Nielsen and I. L. Chuang, \emph{Quantum Computation and Quantum Information} (Cambridge University Press, Cambridge, 2000).
%
%\bibitem{Gijswijt}
%D. Gijswijt, arXiv:1007.0906.
%
%\bibitem{Barker}
%G. P. Barker, L. Q. Eifler, and T. P. Kezlan, Linear Algebra and its Applications, {\bf 20}, 95 (1978).
%
%\bibitem{Nocedal06} J. Nocedal and S. J. Wright, \textit{Numerical Optimization} (Springer, New York, 2006). 
%\end{thebibliography}

\end{document}